\newtheorem{theorem}{\bf Theorem}
\newtheorem{remark}{\bf Remark}
\begin{document}

\title{Proof-of-Prospect-Theory: A Novel Game-based Consensus Mechanism for Blockchain}

\author{\IEEEauthorblockN{Yuqi Xie, Changbing Tang, \IEEEmembership{Member, IEEE}, Feilong Lin, \IEEEmembership{Member, IEEE},  Guanrong Chen, \IEEEmembership{Life Fellow, IEEE},  Zhao Zhang, \IEEEmembership{Member, IEEE},  Zhonglong Zheng, \IEEEmembership{Member, IEEE}}

\thanks{This work was partly supported by the National Natural Science Foundation of China (No. 62103375), and the Zhejiang Provincial Natural Science Foundation of China (No. LY22F0300006, 22NDJC009Z).
(Corresponding author: Changbing~Tang, Guanrong~Chen.)\par
Yuqi Xie and Zhao Zhang are with the College of Mathematical Science, Zhejiang Normal University, Jinhua 321004, China (e-mail: hsiehyq@zjnu.edu.cn; zhaozhang@zjnu.cn).\par
Changbing Tang is with the College of Physics and Electronic Information Engineering, Zhejiang Normal University, Jinhua 321004, China (e-mail: tangcb@zjnu.edu.cn).\par
Feilong Lin and Zhonglong Zheng are with the College of Computer Science and Technology, Zhejiang Normal University, Jinhua 321004, China (e-mail: bruce\_lin@zjnu.edu.cn; zhonglong@zjnu.cn).\par
Guanrong Chen is with the Department of Electrical Engineering, City University of Hong Kong, Hong Kong SAR,  China (e-mail: eegchen@cityu.edu.hk).
}
}


\maketitle

\begin{abstract}
Blockchain technology is a breakthrough in changing the ways of business and organization operations, in which the consensus problem is challenging with practical constraints, such as computational power and consensus standard.
In this paper, a novel consensus mechanism named Proof-of-Prospect-Theory (PoPT) is designed from the view of game theory, where the game prospect value is considered as an important election criterion of the block-recorder.
PoPT portrays the popularity of a node in the network as an attribute, which is constituted by the subjective sensibilities of nodes.
Furthermore, the performances of the PoPT and the willingness of ordinary nodes to participate in the consensus are analyzed, exploring fairness, decentralization, credibility, and the motivating ability of the consensus mechanism.
Finally, numerical simulations with optimization of the PoPT consensus mechanism are demonstrated in the scenario of a smart grid system to illustrate the effectiveness of the PoPT.
\end{abstract}

\begin{IEEEkeywords}
Blockchain, Consensus Mechanism, Game Theory, Proof-of-Prospect-Theory, Prospect Theory.
\end{IEEEkeywords}
\section{Introduction}\label{Introduction}
Blockchain technology was first proposed in Satoshi Nakamoto's Bitcoin white paper \cite{nakamoto}, which is a new distributed framework using chained block data structures, consensus mechanisms, and cryptography to store, validate and manipulate data \cite{Mollah}.
Blockchain technology has a remarkable superiority of decentralization and recently generated considerable interest from both academia and industry \cite{blockchainforSmartEnergy, IoT, Jiaheng}.

As a core technique of blockchain, the consensus mechanism focuses on distributed consistency, which has been used in both identification and tampering prevention to maintain decentralized multi-party mutual trust.
So far, many consensus mechanisms have been proposed, which generally can be divided into four categories \cite{IEM}:
Byzantine Fault Tolerance (BFT)-based, Proof-of-Work (PoW)-based, Proof-of-Stake (PoS)-based, and mixed-type consensus mechanisms.
The BFT-based consensus mechanism allows nodes in the network to vote to record the new blocks directly \cite{BFT}, in which the information exchange hinders the scalability of the blockchain.
PoW-based consensus mechanism allocates the recording rights by the hash power competition among the nodes \cite{nakamoto}.
However, massive amounts of wasted electricity made them widely criticized.
In the PoS-based consensus mechanism, nodes compete for the recording rights with the tokens they hold \cite{PoS}, which reduces the resource consumption in the PoW-based consensus mechanism.
Inspired by PoS, the Delegated Proof-of-Stake (DPoS) was proposed \cite{dpos}, where a subset of miners is elected as block-recorder to validate the chain of blocks.
The downside lies in that the PoS-based consensus mechanism tends to be centralized with the centralization of stake.
As for the mixed-type consensus mechanism, it uses unique theories or methods to suit specific scenarios, including Proof-of-Elapsed-Time (PoET) \cite{poet}, Proof-of-Authority (PoA)~\cite{poa}, Proof-of-Trust (PoT) \cite{pot}, etc.
A survey on this topic can be referred to \cite{surveyofBlockchainconsensus, zheng2017overview}.

Usually, in the process of consensus, there is a conflict between individual rationality and group rationality.
Therefore, the key issue is how to fully stimulate individuals with conflicting interests to formulate advantageous strategies in the competition.
Some approaches adopt game theory to handle this problem, which results from that game theory can provide an analytical tool to study the conflict between individual rationality and group rationality in the field of blockchain \cite{gameblockchain, tang}.
In \cite{Incentivizing}, a Stackelberg game was formulated to model the interaction between blockchain user and miner that decides the number of recruited verifiers over wired or wireless networks.
In \cite{qin2018research}, the mining pool selection problem was modeled as a game theory-based risk decision problem, aiming to maximize the likelihood of successful blockchain mining.
A series of game-theoretical models of competition was used in \cite{johnson2014game} to explore the trade-off between investing in additional computing resources and triggering a costly distributed denial-of-service attack.
However, the above studies only regard game theory as a tool for analyzing research problems but rarely design consensus mechanisms in a bottom-up manner from the perspective of risk decision-making.

In a game, players always make decisions based on perceived losses or gains and focus more on the relative utility that they will get.
Generally, they underestimate small probabilities, even if it is possible to lose all of their wealth.
Because of underestimating small probabilities, they end up choosing high-risk options with higher probabilities.
Some studies have considered the mental states of players.
Most representatively, in 1979, Kahneman and Tversky proposed the prospect theory which is one of the most important theories in decision theory \cite{kahneman1979prospect}.
Because of prospect theory, Kahneman won the 2002 Nobel Prize in Economics, which describes how people make decisions when being presented with alternatives that involve risk, probability, and uncertainty.
Prospect theory has been applied on many investigations, such as privacy protection \cite{Privacy}, energy exchange \cite{xiao2014prospect}, and epidemic spreading \cite{9674745}.

Inspired by the prospect theory, this paper designs a game-based consensus mechanism from the view of risk decision-making.
Specifically, game theory is applied as a key component to design a consensus mechanism named Proof-of-Prospect-Theory (PoPT).
In summary, the main contributions of this paper are summarized as follows.
\begin{itemize}
\item We propose a novel blockchain consensus mechanism (named PoPT) based on prospect theory.
    The PoPT consensus mechanism does not need to solve hash puzzles like PoW, which can save computing power costs.

\item We integrate the prospect theory into the PoPT consensus mechanism from the view of decision-making, where several competitive and cooperative nodes with high Prospect Value (PV) are likelier to be elected as block-recorder to pack data into blocks.

\item We study the comprehensive performance of the PoPT consensus mechanism and the willingness of nodes.
    The optimal probabilities for applicant nodes being block-recorder and the block reward are analyzed theoretically to explore the fairness,  decentralization, credibility, and motivating ability of the proposed consensus mechanism.
\end{itemize}

The rest of this paper is organized as follows.
In section \ref{Related}, we summarize the related work of consensus mechanisms.
Section \ref{Preliminaries} introduces some preliminaries about the expected utility theory and the prospect theory.
In section \ref{popt}, the design, optimization, and process of the PoPT consensus mechanism are studied theoretically.
In section \ref{performance}, we analyze the security and energy consumption of the PoPT consensus mechanism.
In section \ref{simulation}, we do several experiments to illustrate the performance of the PoPT consensus mechanism.
Finally, section \ref{conclusion} concludes the paper.
\section{Related Work}\label{Related}
Although blockchain technology has achieved remarkable development, it still faces some problems, such as security threats, low performance, and fork problems \cite{robust}.
The consensus mechanism in blockchain aims to solve the data consistency problem in the presence of faulty nodes in a distributed system.
Until now, consensus mechanisms have been studied for a long time in the field of distributed systems.
In this section, we outline and discuss four kinds of typical consensus mechanisms: BFT-based, PoW-based, PoS-based, and mixed-type consensus mechanisms.

The BFT mechanism is a fault-tolerance mechanism based on the Byzantine problem, which addresses how to reach a consensus with reliable communication when nodes are at risk of failure \cite{BFT}.
In \cite{PBFT}, an improved BFT mechanism, Practical Byzantine Fault Tolerance (PBFT), was proposed to reduce the operational complexity of BFT from exponential to polynomial level, making BFT usable in practice.
Although the traffic of the PBFT mechanism can be reduced by dividing the consensus process into multiple layers, the mechanism still maintains high complexity.
In addition to the traffic, BFT-based consensus mechanisms such as Zyzzyva \cite{zyzzyva}, HoneyBadgerBFT \cite{HoneyBadger}, and BEAT \cite{BEAT} are proposed to strengthen the security and performance of consensus mechanism.
For recent progress on BFT, readers can refer to \cite{Duan}.

The core idea of PoW is that nodes solve the hash puzzle by competition among computational power to gain the right to record blocks.
However, nodes spend power to solve the meaningless hash puzzle resulting in extremely large amounts of excess electricity consumption.
From the perspective of the puzzle, Proof-of-Solution (PoSo) imitated PoW by replacing the meaningless hash puzzle with a meaningful optimization problem, whose solution is hard to find but easy to verify \cite{poso}.
Moreover, PoW also suffers from the fork, low efficiency, and other problems, and much work has been done to solve these problems.
GHOST \cite{sompolinsky2015secure} and Ethash \cite{buterin2014next} consensus introduced the uncle block to solve the fork and shorten the consensus time interval.
Bitcoin-NG divided the blocks into PoW blocks and normal blocks to improve efficiency \cite{eyal2016bitcoin}.

In the PoS consensus mechanism, block-recorders are chosen based on the number of tokens they hold, in which the stake amount replaces the work nodes do in the PoW.
Based on the PoS, the EOS blockchain platform proposed the DPoS consensus mechanism, where a certain number of representatives are elected through voting to engage \cite{dpos}.
From the perspective of indicator, \cite{BentovLMR14} designed the consensus mechanism, where the stake is replaced with vitality.
In general, PoS-based consensus mechanisms tend to be centralized with the centralization of indicators, which is a key issue that should be carefully considered.

Mixed-type consensus mechanisms are designed from specific scenarios \cite{poet, poa, pot}.
For example, PoET applied the random waiting mechanism to simulate the time consumption required to solve Hash in PoW, and the node with the shortest waiting time gets the priority to publish blocks (similar to the earliest solution of Hash problem in PoW) \cite{poet}.
Proof-of-State-Velocity (PoSV) was designed specifically for the digital social currency Reddcoin, which should only be evaluated in the Reddcoin ecosystem \cite{posv}.
Furthermore, to enhance the scalability of blockchain, a new data structure has sprung up based on directed acyclic graphs (DAGs) in recent years \cite{benvcic2018distributed, churyumov2016byteball}.
However, the validation time of the DAG-based consensus mechanism becomes unstable with the DAG graph expansion. Moreover, the asynchronous mechanism of DAG does not guarantee a global order, which affects the credibility of the service.

Although there has been a lot of research on consensus mechanisms, the design of consensus mechanisms is still a hot topic.
Different blockchain systems are designed with different purposes in mind, which leads to different consensus mechanisms with different focuses.
This paper designs the PoPT consensus mechanism from the perspective of game theory, in which the probability of nodes becoming the block-recorder is derived based on prospect theory.
In fact, prospect theory reflects users' decisions under various psychological expectations and personalities.
According to prospect theory, the PoPT consensus mechanism builds a behavioral model of competition and cooperation among system users, which was seldom studied before.
\section{Preliminaries}\label{Preliminaries}
\subsection{Expected utility theory in game theory}
Game theory is used to analyze the interactions among users.
Consider a scenario with $I$ users, where user $i$ ($i \in I$) is faced with multiple options.
Assume there are only options $X$ and $Y$.
When selecting $X$, there are $M$ possible events with values $x_m$ ($m \in M$), occurring with probability $\rho_{x_m}$.
For another option $Y$, there are $N$ possible events with values $y_n$ ($n \in N$), occurring with probability $\rho_{y_n}$.
Mathematically, for a discrete variable $X$ the expected value $E[X]$ is given by $\sum_{m=1}^{M} x_m\rho_{x_m}$.
Similarly, for $Y$, $E[Y]$ is given by $\sum_{n=1}^{N} y_n\rho_{y_n}$.
User $i$ makes a choice between $X$ and $Y$ based on $E[X]$ and $E[Y]$.
The interactions among users are described as follows: (i) Player: User $i$ in blockchain network, $i \in I$; (ii) Strategy: Making choices among all options, $X$ or $Y$; (iii) Payoff: The utility of players under different strategic combinations, $E[X]$ or $E[Y]$.

Expected utility theory is used to analyze the decision-making of individuals but with limitation in accurately determining the subjective utility function.
Kahneman and Tversky in \cite{kahneman1979prospect} fully explained that choices among prospects exhibit several pervasive effects which are inconsistent with the tenets of expected utility theory.
Fortunately, the relationship between risk and return can be empirically studied by prospect theory.

\subsection{Prospect theory in game theory}\label{Prospecttheoryingametheory}
Prospect theory is defined by the value function and weighting function.
The value function substitutes value in the expected utility theory.
When players make choices, they make comparisons with other specific references intentionally or unintentionally.
When the reference is different, even for the same outcome, players perceive a difference in benefits.
The value function is S-shaped, which is convex upward for ``gain'' and concave downward for ``loss''.
As it moves to both ends, the change of direction decreases in sensitivity.
Compared with the case where the ``gain'' is small on the right of the reference, the smaller the ``loss'' on the left of the reference point, the steeper the function.

Taking option $X$ with values $x_m$ $(m \in M)$ for example, the value function is set as \cite{tversky1992advances}:
\begin{equation}
v(x_m)= \begin{cases}\left(x_m-x_{0}\right)^{\alpha}, & \text { if } x_m \geq x_{0} \\ -\lambda\left(x_{0}-x_m\right)^{\beta}, & \text { if }  x_m \leq x_{0},\end{cases}
\label{value}
\end{equation}
where $x_{0}$ is the reference for the decision maker, and $\alpha$, $\beta$ are coefficients reflecting the attitudes toward risk.
$0<\alpha,\beta<1$, for which the greater they are, the more adventurous the decision maker is.
$\lambda$ is the loss aversion coefficient, where if $\lambda>1$ the decision maker is more sensitive to loss.
In addition, the weighting function $\pi(\rho)$ replaces the probability $\rho$ in expected utility theory, which represents a certain subjective judgment made by the decision maker based on probability.
Here, $\pi(\rho)$ is a monotonic function of $\rho$.
For small probabilities, it always gives a very large weight, that is, $\pi(\rho)>\rho$; for large probabilities, the weight is always very small, that is, $\pi(\rho)<\rho$.
Taking option $X$ with probabilities $\rho_{x_m}$ for example, the weighting function is set as~\cite{prelec1998probability}:
\begin{equation}
	\pi(\rho_{x_m})=\exp(-(-\ln \rho_{x_m})^{\phi}).
	\label{weight}
\end{equation}
Then, the PV of option $X$ can be obtained by
$
	\label{PV}
	P[X]=\sum_{m=1}^{M} v(x_m)\pi(\rho_{x_m})
$, and the PV of option $Y$ can be obtained by
$
P[Y]=\sum_{n=1}^{N} v(y_n)\pi(\rho_{y_n})
$.
User $i$ makes a choice between $X$ and $Y$ based on $P[X]$ and $P[Y]$.
\section{PoPT: Proof-of-Prospect-Theory}\label{popt}
The consensus mechanism of blockchain mainly deals with the election of block-recorder and the consistency of distributed ledgers in a decentralized environment.
This section integrates the prospect theory into the design of the consensus mechanism, for which the design concept uses the indicator of PV as the election criteria for the consensus block-recorder.
On this basis, the optimization and the processes of the PoPT consensus mechanism are developed.
For readability, the main symbols are summarized in Table \ref{tab:Notations}.

\begin{table}[!t]
\caption{SYMBOL \quad DEFINITION\label{tab:Notations}}
\begin{tabular}{c|l}
\hline
\textbf{Symbol} & \textbf{Definition}\\
\hline\
$t$ & Block slot indicator\\
$v(.)$ & Value function of prospect theory \\
$\alpha, \beta, \lambda$ & Parameters of the value function\\
$\pi(.)$ & Weighting function of prospect theory\\
$\phi$ & Parameter of weighting function\\
$pv_i$ & PV of applicant node $i$\\
$l$ & Loss factor \\
$\mathcal{F}, \mathcal{D}, \mathcal{C}$ & Fairness, decentralization, and credibility of PoPT\\
$\mu_1, \mu_2$ & Weights of PoPT's fairness and decentralization\\
$R$ & Block reward \\
$w_i$ & Willingness of ordinary node $i$ to be an applicant node\\
$u_i$ & Utility of ordinary node $i$ in the consensus process\\
$u_{i}^{0}$ & Expected utility of ordinary node $i$ in the consensus process\\
$p_i$ & Probability of ordinary node $i$ becoming block-recorder\\
$k$ & Commission charge rate\\
$\theta_{i}$ & Average transaction volume of  ordinary node $i$\\
$\mathbb{A}$ & The set of applicant nodes\\
$\mathbb{O}$ & The set of ordinary nodes\\
\hline
\end{tabular}
\end{table}

\subsection{Design of PoPT}\label{design}
There are three types of nodes in the proposed PoPT: ordinary nodes, applicant nodes, and block-recorder.
Initially, all nodes are ordinary nodes.
Nodes that want to be block-recorder submit an application with their PV to the blockchain system.
The blockchain system calculates the probability of each applicant node becoming a block-recorder according to the purpose of the system, and the block-recorder is selected from the applicant nodes by this probability.

Nodes in the network interact according to the purpose of the system.
For example, nodes in an energy trading system always exchange information about the energy price and the amount of energy to be purchased.
After exchanging information, they decide whether to agree to participate in the transaction, and the PV quantifies the nodes' decisions by prospect theory.

Dividing time into block slots, consider the scenario where node $j$ interacts with node $i$ at the $t$-\emph{th} block slot.
Taking trading as an example of interaction, assume that trading price of this trade is $u_{ij}$, while the expected price of seller $j$ for this trade is $u_{ij}^0$.
Denote the willingness of buyer $i$ for this trade by $\rho_{ij}(t)$.
Substitute $u_{ij}(t)$ and $u_{ij}^0$ into Eq. \eqref{value} for $x_m$ and $x_0$, respectively.
Substitute $\rho_{ij}(t)$ for $\rho_{x_m}$ in Eq. \eqref{weight}.
Referring to the prospect theory-based payoff mentioned in subsection \ref{Prospecttheoryingametheory}, we denote the following equation as $p_{ij}(t)$ which means the PV of seller $i$ to buyer $j$ at the $t$-\emph{th} block slot:
\begin{equation}\label{pij}
 p_{ij}(t)=v\left(u_{ij}(t)\right) \pi\left(\rho_{ij}(t)\right).
\end{equation}
According to the implication of prospect theory, $p_{ij}(t)$ describes the subjective utility in the mind of buyer $j$ when trading with seller $i$ .
A higher $p_{ij}(t)$ indicates that buyer $j$ is more satisfied with seller $i$ and more willing to trade with seller $i$.
If all buyers are willing to interact with seller $i$, then it is reasonable to let seller $i$ participate in data recording.
For $I$ sellers and $J$ buyers, there is a matrix
\begin{equation}
	PV^{\prime}(t)=(pv^{\prime}_{ij}(t))_{I \times J}=\left[\begin{array}{ccc}
		pv^{\prime}_{11}(t) & \cdots & pv^{\prime}_{1 J}(t) \\
		\vdots & \ddots & \vdots \\
		pv^{\prime}_{I 1}(t) & \cdots & pv^{\prime}_{I J}(t)
	\end{array}\right].
\label{4}
\end{equation}
Here, in order to avoid unfairness caused by buyers' expectations, elements in the matrix \eqref{4} are normalized with the 2-norm,
i.e., $pv_{i j}^{\prime}(t)=pv_{i j}(t) /\left\|pv_{. j}(t)\right\|_2$.
Considering that the character of the node will change over time, take only the values within the last $T$ periods and a loss factor $l$ into account, $pv_{i}(t)$ is the accumulated PV of seller $i$ until the $t$-\emph{th} block slot:
\begin{equation}\label{pit}
	pv_{i}(t)=\sum_{k=t-T+1}^{t} l^{t-k} \frac{1}{J} \sum_{j=1}^{J} pv_{i j}^{\prime}(k).
\end{equation}
The PV describes the status of nodes in the blockchain system, and it is publicly recorded in the blockchain.
If node $i$ is an applicant node, then $pv_{i}(t)$ generally determines whether applicant node $i$ can be the block-recorder.
If no confusion occurs, we omit $t$.
\subsection{Optimization of PoPT}\label{Optimization}
This subsection optimizes the consensus mechanism by determining the optimal chosen probability and block reward.
Optimization problem \textbf{\emph{P\bm{$1$}}} is to set the probability of being the block-recorder for each application node, while ensuring the fairness, decentralization, and credibility of PoPT.
Optimization problem \textbf{\emph{P\bm{$2$}}} refers an optimal reward to motivate ordinary nodes to participate in the consensus process based on the probability in \textbf{\emph{P\bm{$1$}}}.
We specifically design the optimization smart contract to handle these two problems with the Grey Wolf Optimization (GWO) algorithm.
\subsubsection{Optimal probability}
The probability of an applicant node becoming the block-recorder is determined based on its PV, which ensures that the PoPT consensus mechanism is fair, decentralized, and reliable.

From the fairness point of view, the higher the prospect value of an applicant node, the higher the probability of becoming the block-recorder should be.
Denoting the set of applicant nodes as $\mathbb{A}$, the share of the PV of applicant node $i$ in the overall PV is $a_i$, that is, \begin{equation}\label{ai}
a_i=pv_i/\sum_{j \in \mathbb{A}} pv_j.
\end{equation}
And the probability of applicant node $i$ becoming the block-recorder is $p_i$.
We refer to the expectational fairness to describe the fairness $\mathcal{F}$ of PoPT \cite{richer}:
\begin{equation}
\mathcal{F}=1-\frac{\sum\limits_{i \in \mathbb{A}}\left|a_i-p_i\right|}{N_a \cdot \max\limits_{i \in \mathbb{A}}\left|a_i-p_i\right|}.
\end{equation}
$\mathcal{F} \in [0,1]$ and the higher the $\mathcal{F}$, the fairer the PoPT.
The greater the $\mathcal{F}$, the applicant nodes' probabilities of becoming the block-recorder are closer to their PV shares.
If we only consider fairness, some nodes will deliberately manipulate their PVs to widen the gap between their probability of being the block-recorder and the probability of other nodes, which will make the consensus process centralized.
Thereby, we describe the decentralization $\mathcal{D}$ of PoPT by the Nakamoto index which describes the minimum number of nodes required to make the sum of probabilities greater than 50\% \cite{lin2021measuring}:
\begin{equation}
\mathcal{D}=\frac{k^*}{|\mathbb{A}|},
\end{equation}
where
\begin{equation}
k^*=\min \left\{k \in\left[1,2, \ldots, |\mathbb{A}|\right]: \sum_{i=1}^k p_i \geq 0.5\right\}.
\end{equation}
The greater the $\mathcal{D}$, the PoPT consensus mechanism is more decentralized.
In addition to fairness and decentralization, we define the credibility of PoPT as
\begin{equation}
\mathcal{C}=\frac{N_a \cdot \sum\limits_{i \in \mathbb{A}} p v_i p_i}{\sum\limits_{i \in \mathbb{A}} p v_i},
\end{equation}
which measures the expectations of PV for the selected block-recorder.
The greater the $\mathcal{C}$, the larger the PV of the block-recorder.
Recalling the definition of PV, we observe that the larger $\mathcal{C}$, the more reliable the consensus mechanism.

The optimal probability $\mathbf{p}=\left\{p_1, \ldots p_i, \ldots, p_{|\mathbb{A}|}\right\}$ is set to balance fairness, decentralization, and credibility.
It is determined by the following optimization problem.

\textbf{\emph{P\bm{$1$} (Optimal probability)}}
\begin{equation}
\begin{aligned}
& \mathbf{p}^*=\underset{\mathbf{p}}{\arg \max } \quad \frac{\mu_1^2+\mu_2^2+\left(1-\mu_1-\mu_2\right)^2}{\left(\frac{\mu_1^2}{\mathcal{F}}+\frac{\mu_2^2}{\mathcal{D}}+\frac{\left(1-\mu_1-\mu_2\right)^2}{\mathcal{C}}\right)} \\
& \text { s.t. } 0 \leq p_i, \forall i \in \mathbb{A} \\
& \\
& \qquad \sum_{i \in \mathbb{A}} p_i=1,
\end{aligned}
\end{equation}
where $\mu_1$ and $\mu_1$ are the weights of the PoPT's fairness and decentralization.
The weights satisfy $0 \leq \mu_1, \mu_2 \leq 1$ and $\mu_1+\mu_2 \leq 1$.
To describe PoPT universally, we denote the function optimized by \textbf{\emph{P\bm{$1$}}} as $\mathcal{O}$ which is the comprehensive performance of PoPT:
\begin{equation}\label{O}
\mathcal{O}=\frac{\mu_1^2+\mu_2^2+\left(1-\mu_1-\mu_2\right)^2}{\left(\frac{\mu_1^2}{\mathcal{F}}+\frac{\mu_2^2}{\mathcal{D}}+\frac{\left(1-\mu_1-\mu_2\right)^2}{\mathcal{C}}\right)}.
\end{equation}

To solve \textbf{\emph{P\bm{$1$}}}, the GWO algorithm is applied, which is one of the population-based meta-heuristics algorithm in nature.
First, randomly initialize and hierarchize the population of grey wolves.
Then the coefficient vectors of upper-class wolves are updated to encircle the prey.
Next, the lower-class wolves update their positions in accordance with the upper-class wolves.
The wolves keep exploring until they are close enough to the prey and finally the wolves attack the prey, which means the optimal solution is found.
\subsubsection{Optimal block reward}
Too few nodes participating in the consensus mechanism will put the system at risk.
Higher block reward facilitates nodes to participate in the consensus mechanism.
However, block reward $R$ comes from the commission charge submitted by all nodes.
For these reasons, $R$ should not be too low or too high.

Denoting the willingness of ordinary node $i$ to apply to be an applicant node as $w_{i}$ which is related to the difference between the node's expected utility and the actual utility.
Inspired by the value function of prospect theory in Eq. \eqref{value}, we set
\begin{equation}
w_{i}= \begin{cases}\left(u_{i}-u_{i}^{0}\right)^{\alpha}, & \text { if } u_{i} \geq u_{i}^{0} \\ -\lambda\left(u_{i}^{0}-u_{i}\right)^{\beta}, & \text { if }  u_{i} \leq u_{i}^{0},\end{cases}
\end{equation}
where $u_{i}^{0}$ is the expected utility of ordinary node $i$ in the consensus process, and $u_{i}$ is the actual utility that equals the reward $\pi(p_{i}) R$ from the consensus process minus the commission charge $c_{i}$, i.e.
\begin{equation}
u_{i}=\pi(p_{i}) R-c_{i}.
\end{equation}
Here, $\pi(p_{i})$ is ordinary node $i$'s intuition about the objective probability $p_{i}$, that is, ordinary node $i$ thinks it has a probability $\pi(p_{i})$ of getting the block reward, but the probability that it actually gets the block reward is $p_{i}$.
Formally, we set
\begin{equation}\label{pip}
	\pi(p_{i})=\exp({-(-\ln p_{i})^{\phi_i}}),
\end{equation}
which is set based on the weighting function of prospect theory in Eq. \eqref{weight} and provides a more realistic representation of intuition.
$\phi_i$ is the rationality of the ordinary node $i$.
The larger $\phi_i$ is, the lower the rationality of ordinary node $i$.
When $p_{i}$ is relatively large, a larger $\phi_i$ indicates that node $i$ will exaggerate its probability of getting the block reward.
Similarly, when $p_{i}$ is relatively small, a larger $\phi_i$ indicates that it will shrink its probability of getting the block reward.

The commission charge of node $i$ is set as
$c_{i}=kR \theta_{i}$,
where $k$ is the commission charge rate.
The higher the $R$ is, the more commission charge the nodes need to pay.
$\theta_{i}$ is the average volume of transactions that node $i$ is involved.
Denoting the set of ordinary nodes as $\mathbb{O}$, the optimal $R$ is set to maximize $\sum\limits_{i \in \mathbb{A}}w_{i}$.
Then we can get an optimization problem.

\textbf{\emph{P\bm{$2$} (Optimal Block Reward)}}
\begin{equation}
\begin{aligned}
&R^{*}=\arg \max _{R} \sum_{i \in \mathbb{O}}w_{i}
\end{aligned}
\end{equation}
\smallbreak
\begin{theorem}\label{th1}
For the ordinary node with positive utility in PoPT, there is a unique optimal block reward $R^{*}$ to \textbf{P\bm{$2$}}, which satisfies
\begin{equation}\label{R}
R^{*}=\frac{u_{i}^{0}}{\pi(p_{i})-k \theta_{i}}.
\end{equation}
\end{theorem}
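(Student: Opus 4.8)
The plan is to reduce \textbf{\emph{P2}} to a one–dimensional problem in $R$ and attack it with calculus on a piecewise objective. First I would substitute the commission charge $c_i = kR\theta_i$ into $u_i = \pi(p_i)R - c_i$ to obtain $u_i(R) = \left(\pi(p_i)-k\theta_i\right)R$, which is linear in $R$ with slope $\gamma_i := \pi(p_i)-k\theta_i$. The hypothesis that the node has positive utility is precisely the statement that $\gamma_i>0$, so $u_i(R)$ is strictly increasing and crosses the reference level $u_i^{0}$ at the single value $R=u_i^{0}/\gamma_i$. This already isolates the candidate $R^{*}$ as the reward at which the node's realized utility meets its expectation, i.e.\ the point where $w_i$ switches from its loss branch to its gain branch.

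Next I would analyze the willingness $w_i(R)$ across the two branches selected by the sign of $u_i(R)-u_i^{0}$: on the loss branch ($R<R^{*}$) one has $w_i=-\lambda\left(u_i^{0}-\gamma_i R\right)^{\beta}$, and on the gain branch ($R>R^{*}$) one has $w_i=\left(\gamma_i R-u_i^{0}\right)^{\alpha}$, both continuous and meeting the value $0$ at $R^{*}$. Because $\gamma_i>0$ makes each branch individually increasing, the finite optimum of the aggregate $\sum_{i\in\mathbb{O}}w_i$ must come from the tension between a positive-utility node's growing willingness and the commission burden $c_i=kR\theta_i$ that larger $R$ loads onto the high-$\theta$ (negative-slope) nodes. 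I would then form the first-order condition for the aggregate and show that the balance is attained exactly when the pivotal positive-utility node sits at its break-even reward; solving $u_i(R)=u_i^{0}$ with $\gamma_i>0$ yields the closed form $R^{*}=u_i^{0}/\left(\pi(p_i)-k\theta_i\right)$.

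For uniqueness I would argue from the injectivity of the strictly monotone map $R\mapsto u_i(R)$ together with the fixed sign of each branch derivative (governed by $\gamma_i$ and by the loss/gain exponent factor with $0<\alpha,\beta<1$), so at most one admissible $R>0$ satisfies the optimality condition; continuity of the objective then supplies existence and hence a unique maximizer. I expect the main obstacle to be the non-smoothness at the regime boundary: since $0<\alpha,\beta<1$, the one-sided derivatives of $w_i$ diverge as $R\to R^{*}$, so a naive interior stationarity argument fails at the kink and $R^{*}$ must instead be handled as a boundary optimum, comparing the monotone behavior on the two branches. Care is likewise needed to justify that enforcing the positive-utility (participation) requirement is what selects $R^{*}$ as the optimizer of the full sum rather than a vanishing interior gradient.
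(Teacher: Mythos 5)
Your setup is faithful to the paper's opening moves: substituting $c_i = kR\theta_i$ gives $u_i(R) = \left(\pi(p_i)-k\theta_i\right)R = \gamma_i R$, the positive-utility hypothesis gives $\gamma_i>0$, and the break-even reward $u_i^0/\gamma_i$ is the unique crossing point of $u_i(R)$ with $u_i^0$. But the central step of your plan --- ``form the first-order condition for the aggregate and show that the balance is attained exactly when the pivotal positive-utility node sits at its break-even reward'' --- is announced, never carried out, and cannot be carried out as you describe it. With $\gamma_i>0$, each $w_i(R)$ is strictly increasing on \emph{both} branches: on the gain branch $w_i'(R)=\alpha\gamma_i\left(\gamma_i R-u_i^0\right)^{\alpha-1}>0$, and on the loss branch $w_i'(R)=\lambda\beta\gamma_i\left(u_i^0-\gamma_i R\right)^{\beta-1}>0$; as you yourself note, both one-sided derivatives diverge to $+\infty$ at the kink. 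Consequently $u_i^0/\gamma_i$ is not a ``boundary optimum'' where monotone behavior reverses --- the willingness increases \emph{through} the kink from both sides, and comparing the two branches yields no maximum there. The ``tension'' you invoke to produce a finite optimizer rests on high-$\theta$ nodes with $\gamma_i<0$, but those are precisely the nodes excluded by the theorem's positive-utility hypothesis and they appear nowhere in the paper's argument. Your uniqueness paragraph likewise establishes only injectivity of the monotone map $R\mapsto u_i(R)$, i.e., uniqueness of the break-even point, not uniqueness of a maximizer of $\sum_{i\in\mathbb{O}}w_i$.

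The paper's own route is different and you should compare against it: it partitions $\mathbb{O}$ into $\mathbb{O}_1=\{i: u_i\geq u_i^0\}$ and $\mathbb{O}_2$, computes the second derivative $\sum_{i\in\mathbb{O}_1}\alpha(\alpha-1)\left(u_i-u_i^0\right)^{\alpha-2}\gamma_i^2<0$ to obtain strict concavity on the gain branch (whence uniqueness over the convex feasible set), then imposes the first-order condition $\sum_{i\in\mathbb{O}_1}\alpha\left(u_i-u_i^0\right)^{\alpha-1}\gamma_i=0$ and reads off $\gamma_i R-u_i^0=0$, i.e., Eq.~\eqref{R}; for $i\in\mathbb{O}_2$ it uses monotonicity of $w_i$ in $R$ to conclude $R^*\leq u_i^0/\gamma_i$, extending \eqref{R} to those nodes. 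Your observation that the derivative of an increasing branch never vanishes and blows up at the kink is, in fact, a pertinent criticism of that interior-stationarity step --- but a criticism is not a proof, and you supply no replacement mechanism that actually forces the optimum to the break-even reward. To complete your proposal you would need either to adopt the paper's concavity-plus-FOC computation on $\mathbb{O}_1$, or to add a modeling ingredient (for instance, retaining loss-making nodes with $\gamma_i<0$ in the objective) that makes $\sum_i w_i$ eventually decreasing in $R$, after which a genuine boundary/first-order comparison could single out $R^*=u_i^0/\left(\pi(p_i)-k\theta_i\right)$.
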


\begin{proof}
Node $i$ gains a positive utility means
\begin{equation}
    u_{i}=\pi(p_{i}) R-c_{i}=\pi(p_{i})R-kR \theta_{i}>0.
\end{equation}
Because $R\geq 0$, one can obtain
$$k<\frac{\pi(p_{i})}{\theta_i}.$$
The set of ordinary nodes satisfying $u_{i} \geq u_{i}^{0}$ is denoted as $\mathbb{O}_1$,  other ordinary nodes are in set $\mathbb{O}_2$, then
\begin{equation}\label{part}
\sum_{i \in \mathbb{O}}w_{i}=\sum_{i \in \mathbb{O}_1}w_{i}+\sum_{i \in \mathbb{O}_2}w_{i}.
\end{equation}
For the first term on the right side of Eq. \eqref{part}, one can find the second-order partial derivative of $\sum\limits_{i \in \mathbb{O}_1}w_{i}$ to $R$ is
\begin{equation}
\frac{\partial^{2} \sum\limits_{i \in \mathbb{O}_1}w_{i}}{\partial R^{2}}=\sum\limits_{i \in \mathbb{O}_1}\alpha(\alpha-1)\left(u_{i}-u_{i}^{0}\right)^{\alpha-2} \cdot\left(\pi(p_{i})-k \theta_{i}\right)^{2}.
\end{equation}
Because $0<\alpha<1$ and $u_{i}>u_{i}^{0}$, one has $\frac{\partial^{2} \sum\limits_{i \in \mathbb{O}_1}w_{i}}{\partial R^{2}} < 0$.
Because the feasible region of $R$ is a convex set, there is a unique optimal block reward $R_{i}$ to \textbf{\emph{P\bm{$2$}}}.
Set the first-order partial derivative of $w_{i}$ to $0$, that is
\begin{equation}
\frac{\partial \sum\limits_{i \in \mathbb{O}_1}w_{i}}{\partial R}=\sum\limits_{i \in \mathbb{O}_1}\alpha\left(u_{i}-u_{i}^{0}\right)^{\alpha-1} \cdot\left(\pi(p_{i})-k \theta_{i}\right)=0.
\end{equation}
Because $k<Q_i/\theta_i$, $\pi(p_{i})>0$, and $\alpha >0$,
\begin{equation}
\alpha\left(u_{i}-u_{i}^{0}\right)^{\alpha-1} \cdot\left(\pi(p_{i})-k \theta_{i}\right) \geq 0, \forall i \in \mathbb{O}_1.
\end{equation}
Thus,
\begin{equation}
\pi(p_{i})R-kR \theta_{i}-u_{i}^{0} = 0, \forall i \in \mathbb{O}_1.
\end{equation}
Eq. \eqref{R} is satisfied for $i \in \mathbb{O}_1$.
For the second term on the right side of Eq. \eqref{part}, because $u_{i} \leq u_{i}^{0}$ and $w_{i}$ increases with increasing $R$,
 one can obtain
$$R^{*}\leq\frac{u_{i}^{0}}{\pi(p_{i})-k \theta_{i}}, \forall i \in \mathbb{O}_2.$$
Thus, Eq. \eqref{R} is satisfied for $i \in \mathbb{O}_2$ as well.
\end{proof}

\begin{remark}
For different ordinary node $i$, $\frac{u_{i}^{0}}{\pi(p_{i})-k \theta_{i}}$ are different.
And in order to maximize the willingness of more nodes, the final value of $R^{*}$ is taken as the median of all $\frac{u_{i}^{0}}{\pi(p_{i})-k \theta_{i}}$.
\end{remark}
\begin{remark}
For different ordinary node $i$, $\frac{\pi(p_{i})}{\theta_i}$ are different.
Based on the principle of individual rationality, nodes should get positive utility for participating in the consensus process, which results in $
k < \min\limits_{i \in \mathbb{O}} \frac{\pi(p_{i})}{\theta_{i}}
$.
\end{remark}

\subsection{Process of PoPT}
In the process of PoPT, we design two smart contracts to execute different missions.
Information smart contract is dedicated to publishing commission charge rate $k$ and the block reward $R$ for ordinary nodes.
Optimization smart contract is used to solve two optimization problems in subsection \ref{Optimization}.

The process of PoPT is illustrated in Fig. \ref{Process} in detail.
\begin{figure*}[!t]
\centerline{\includegraphics[scale=0.42]{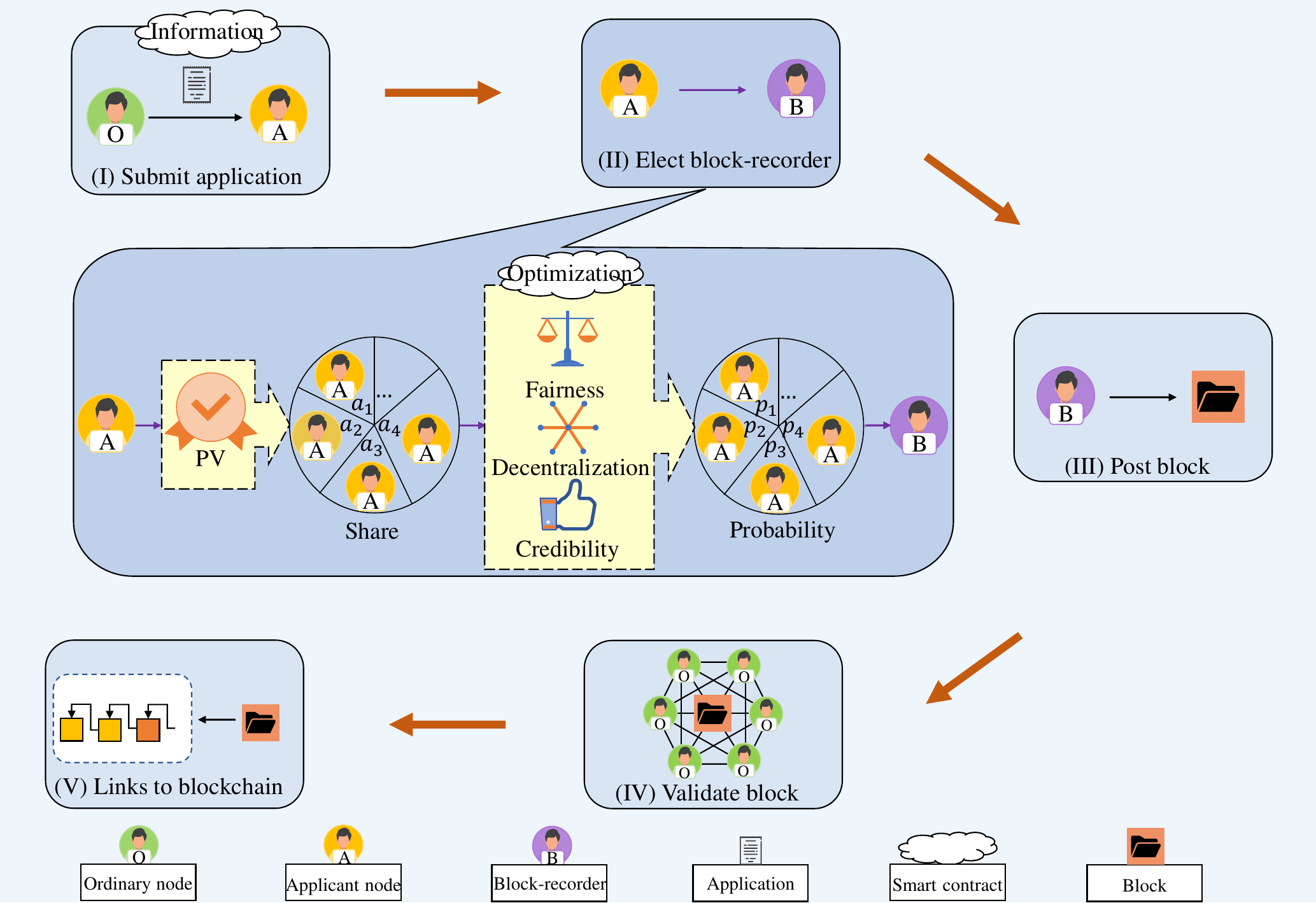}}%
\caption{\small{The process of PoPT.}}
\label{Process}
\end{figure*}
The specific steps are as follows:
\begin{enumerate}[I)]
\item \textbf{Submit application.}
Ordinary nodes receive information from the information smart contract and submit applications to become applicant nodes if they are willing to become block-recorders;
\item \textbf{Elect block-recorder.}
The PV and the PV share of applicant nodes are calculated according to Eq. \eqref{pit} and Eq. \eqref{ai}.
Then, the optimization smart contract calculates the probabilities of application nodes becoming the block-recorder, which considers the fairness, decentralization, and credibility of PoPT.
The block-recorder is elected from applicant nodes according to their probabilities of becoming the block-recorder;
\item \textbf{Post block.}
The block-recorder packs all the transaction information collected during this time slot into a new block and publishes the block in the network;
\item \textbf{Validate block.}
The nodes that receive the block verify the block and send it to all their neighbors if the information is accurate; if not, the block is orphaned;
\item \textbf{Link to blockchain.}
When the block passes the validation of all nodes in the blockchain network, the block is linked to the blockchain, and the block-recorder gains the block reward.
\end{enumerate}

\section{Performance Analysis}\label{performance}
Security and low energy consumption are two fundamental properties of consensus mechanisms.
In this section, we analyze the PoPT's security and energy consumption from its design and operation.
\subsection{Security}
We consider credibility and decentralization when setting the probability to qualify applicant nodes for block-recorder.
In fact, high credibility and decentralization mean fewer malicious nodes.
On one hand, credibility reflects the PV of the chosen block-recorder, and high credibility means that the chosen block-recorder has a high PV.
As we mentioned when formulating the PV, applicant nodes with high PV are adored by the nodes in the network and can be trusted to block data.
Therefore, high credibility guarantees the security of PoPT consensus mechanism to some extent.
On the other hand, applicant nodes may conspire to form a clique to win the block reward, more egregious, manipulate the consensus process.
Fortunately, decentralization describes the minimum number of nodes required to make the sum of probabilities greater than 50\%, and high decentralization ensures that more applicant nodes are needed to form such a clique.
Thus, high decentralization secures the PoPT consensus mechanism from the perspective of preventing collusion.
\begin{figure*}[htbp]
\centering
\subfloat[]{\includegraphics[scale=0.35]{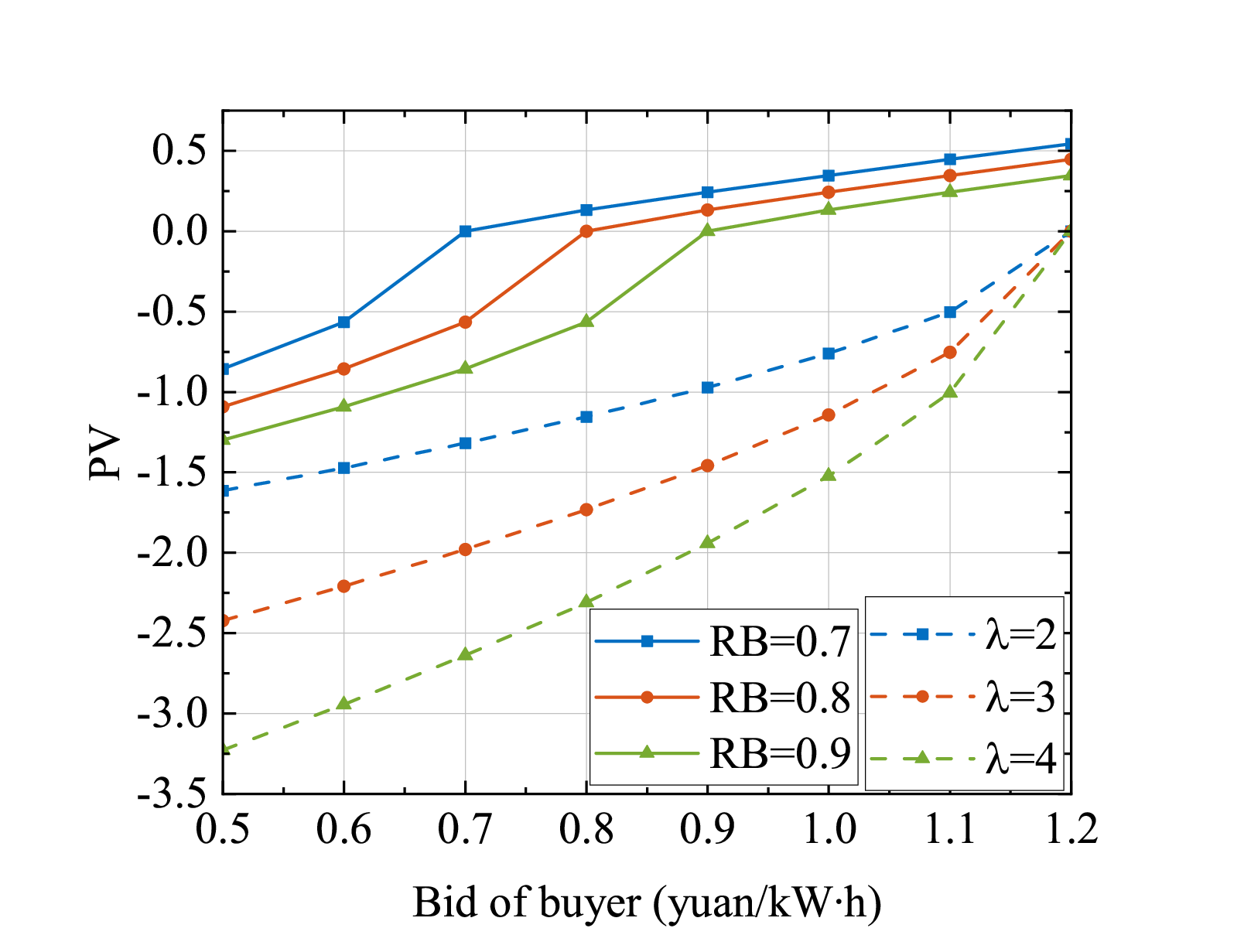}%
\label{fig_2_case1}}
\subfloat[]{\includegraphics[scale=0.35]{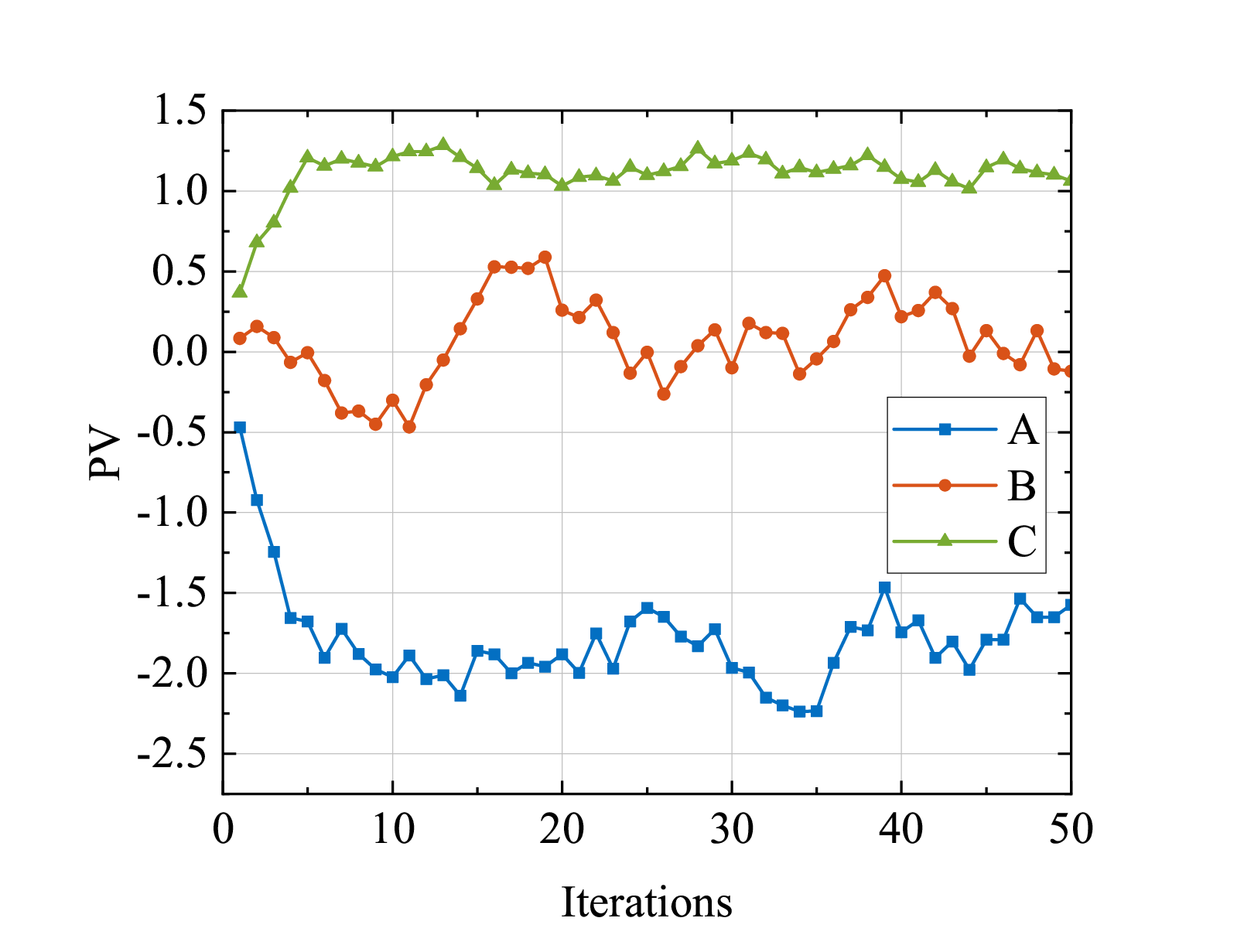}%
\label{fig_2_case2}}
\caption{\small{The variation of PV. (a)The effect of seller's personality on buyers' PV at different bids. (b)The comparison of PV among three types of buyers.}}
\end{figure*}
\begin{figure}[tbp]
\centerline{\includegraphics[scale=0.35]{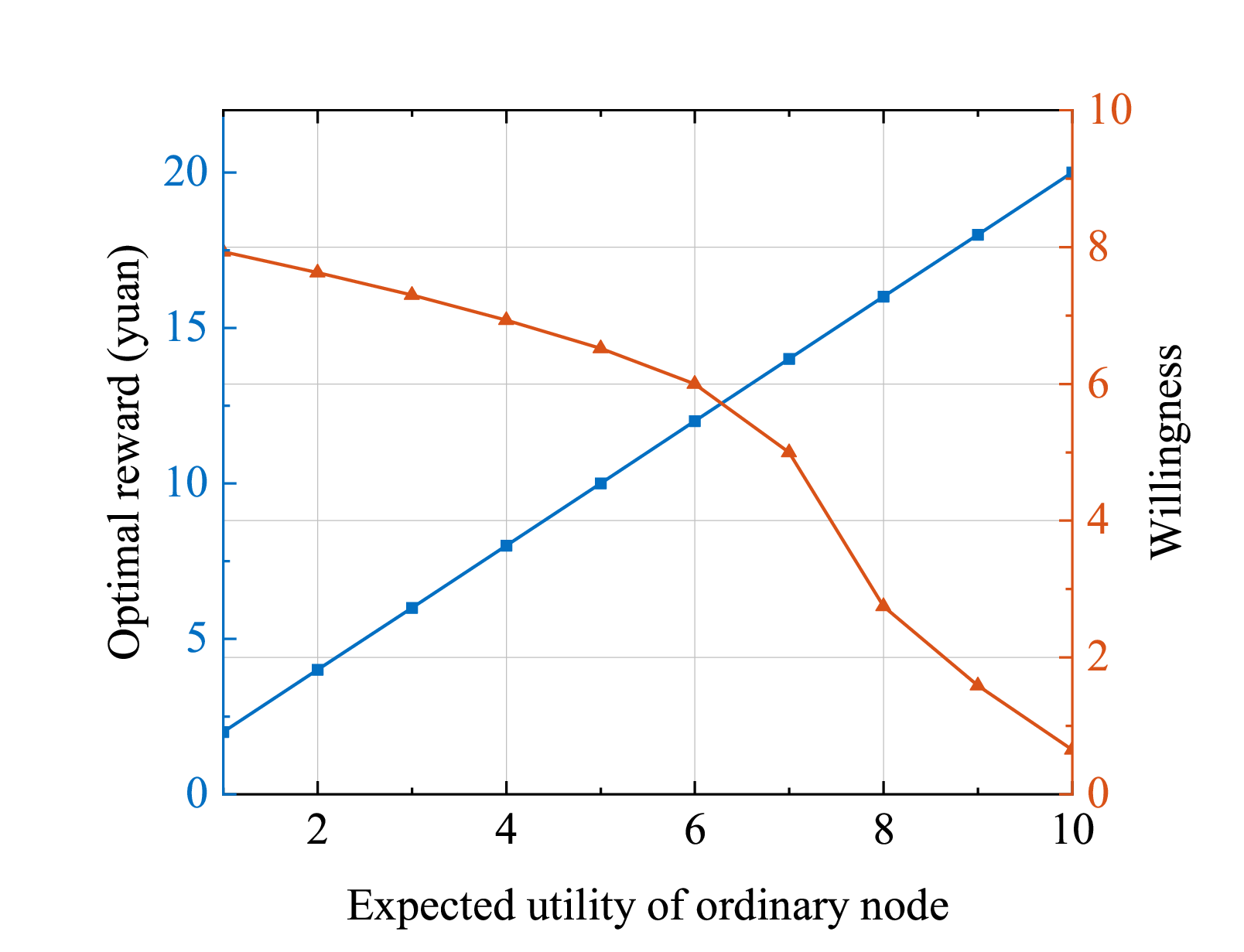}}%
\caption{\small{The optimal reward and willingness of ordinary nodes to be application nodes on expected utility of ordinary nodes.}}
\label{fig_3}
\end{figure}
\begin{figure}[tbp]
\centerline{\includegraphics[scale=0.35]{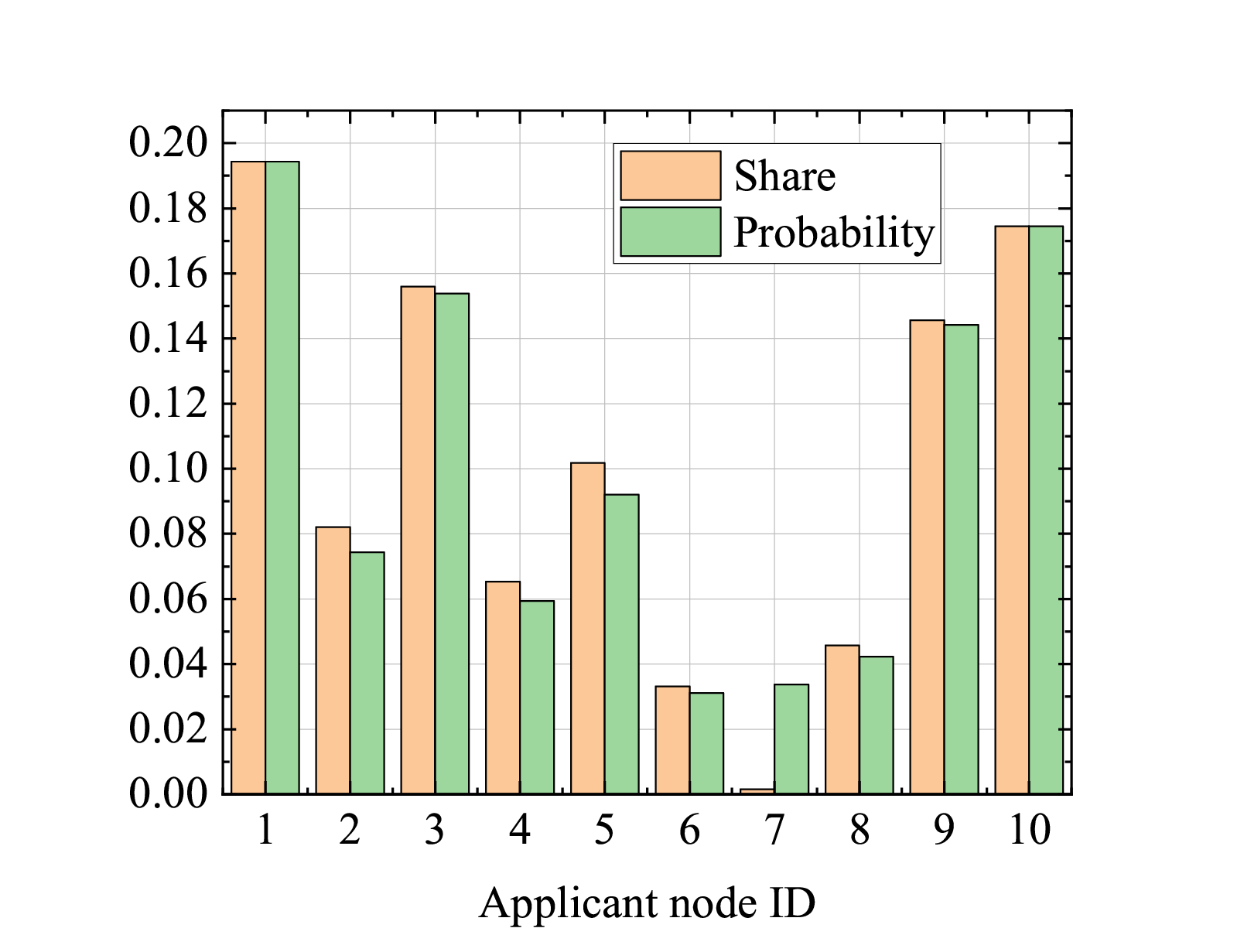}}%
\caption{\small{The comparison of the applicant nodes' PV share and the probability of becoming the block-recorder.}}
\label{fig_4}
\end{figure}
\subsection{Energy consumption}
The energy consumption is reduced by avoiding solving the meaningless puzzle as PoW does.
The PoPT consensus mechanism elects the block-recorder by the probability gained in the optimization problem \textbf{\emph{P\bm{$1$}}}, which only takes very little energy.
The PoPT consensus mechanism first determines the probability of each applicant node by solving the optimization problem \textbf{\emph{P\bm{$1$}}}.
Then the PoPT determines the block-recorder by this probability.
The optimization problem \textbf{\emph{P\bm{$1$}}} has only one objective function and two linear constraints, which is not difficult to solve.
Some solvers in MATLAB software can be used to solve this problem directly, such as \emph{fmincon}.
In addition, some heuristic algorithms can also be used to solve this problem, such as the GWO algorithm and particle swarm optimization.
We find that the computational power and time required to solve this problem are low, which indicates that our proposed PoPT consensus mechanism is low energy consumption.
\section{Simulation Analysis}\label{simulation}
\subsection{Simulation environment}
\begin{figure*}[htbp]
\centering
\subfloat[]{\includegraphics[scale=0.34]{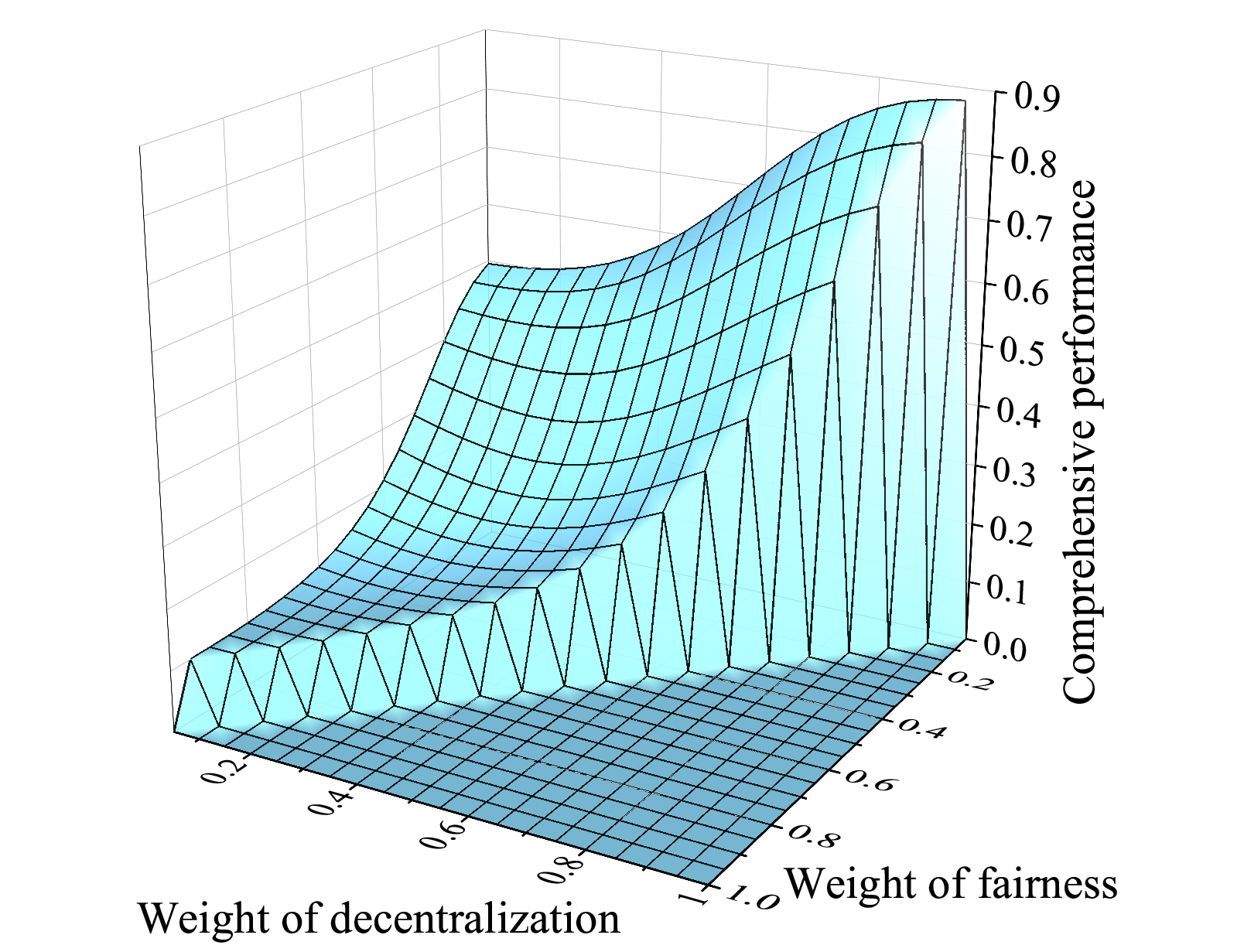}%
\label{fig_5_case1}}
\subfloat[]{\includegraphics[scale=0.34]{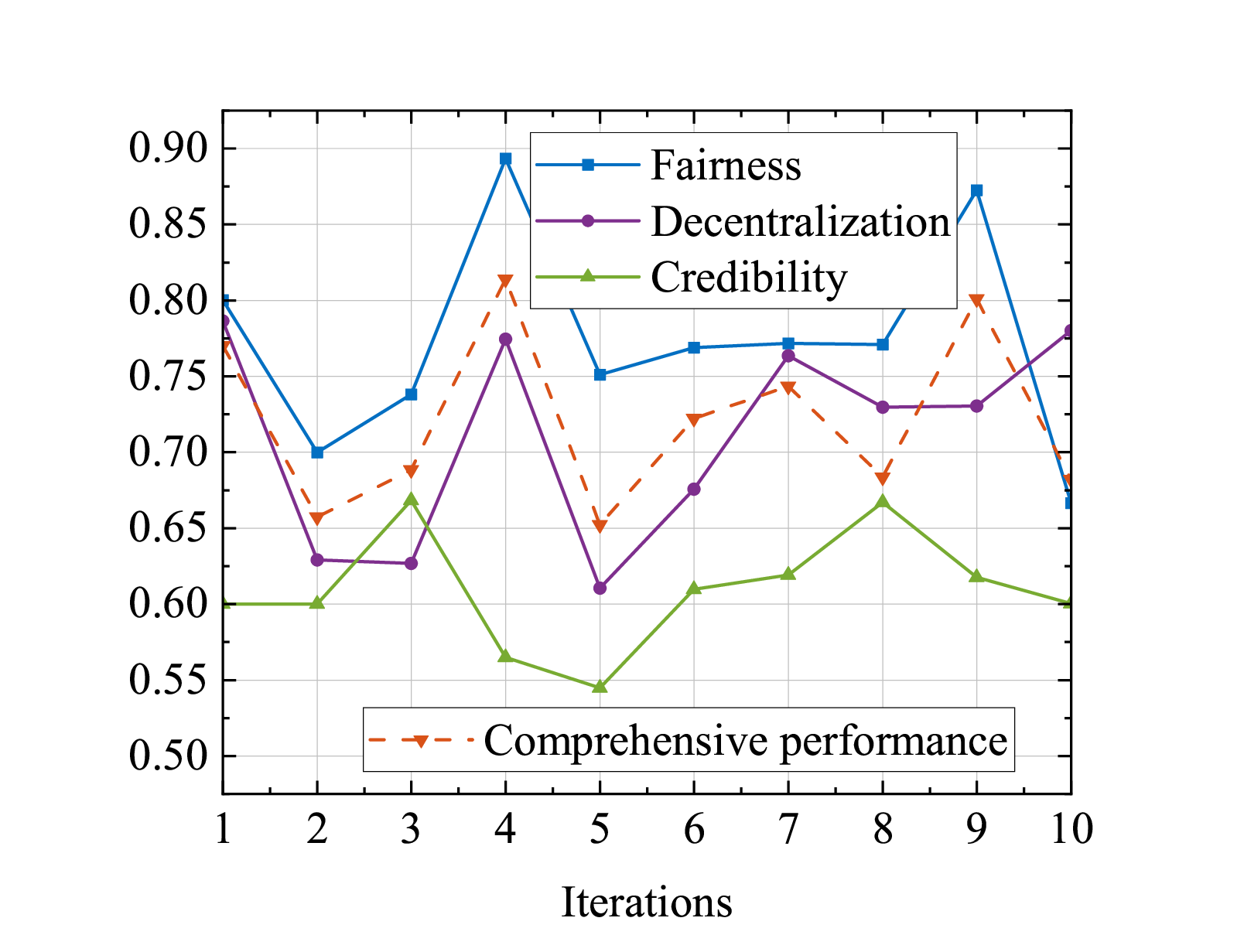}%
\label{fig_5_case2}}
\caption{\small{The performance of the PoPT consensus mechanism. (a)The comprehensive performance  according to definition in Eq. \eqref{O}. (b)The fairness, decentralization, credibility, and the comprehensive performance in the smart grid trading system.}}
\end{figure*}
\begin{figure*}[htbp]
\centering
\subfloat[]{\includegraphics[scale=0.34]{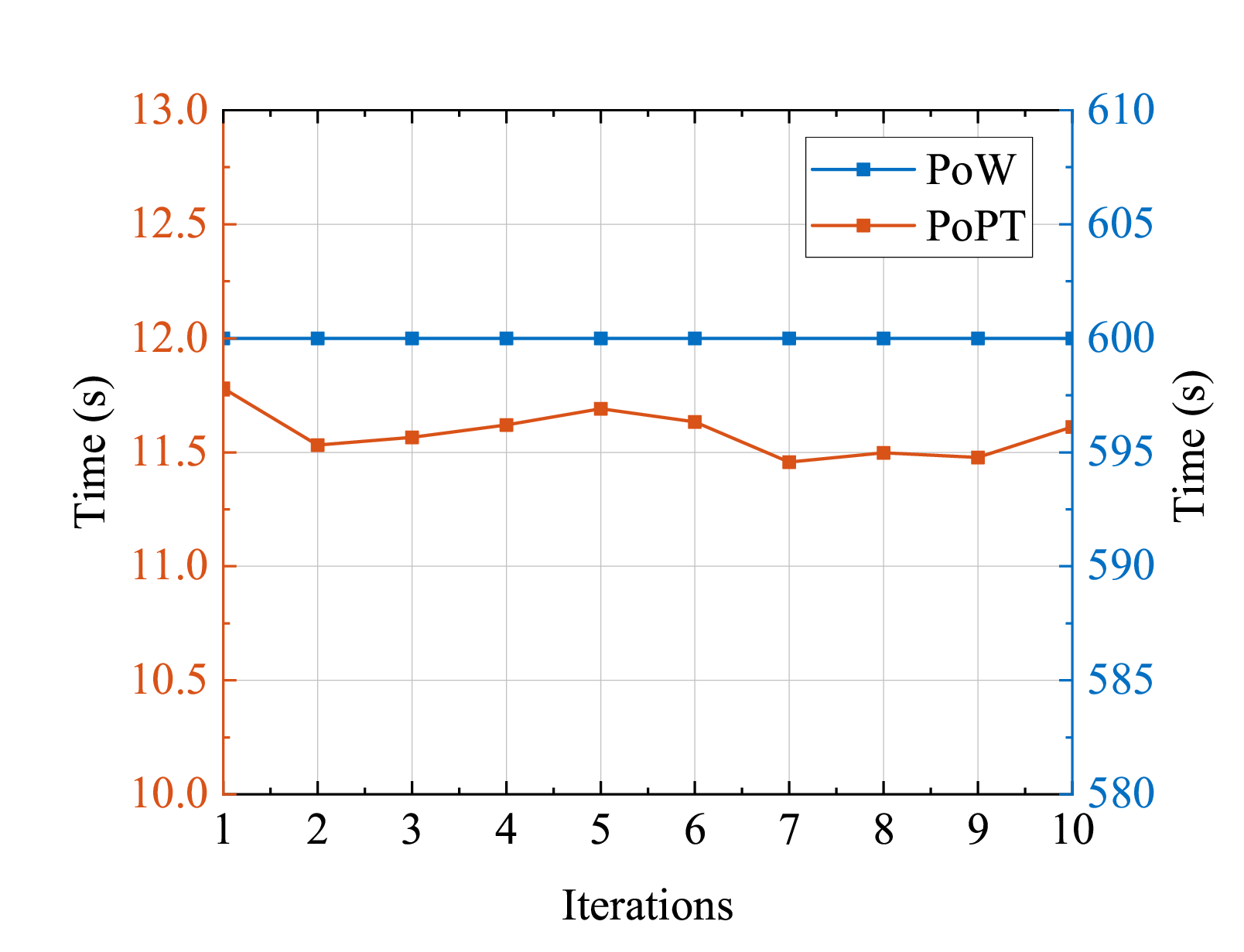}%
\label{fig_6_case1}}
\subfloat[]{\includegraphics[scale=0.34]{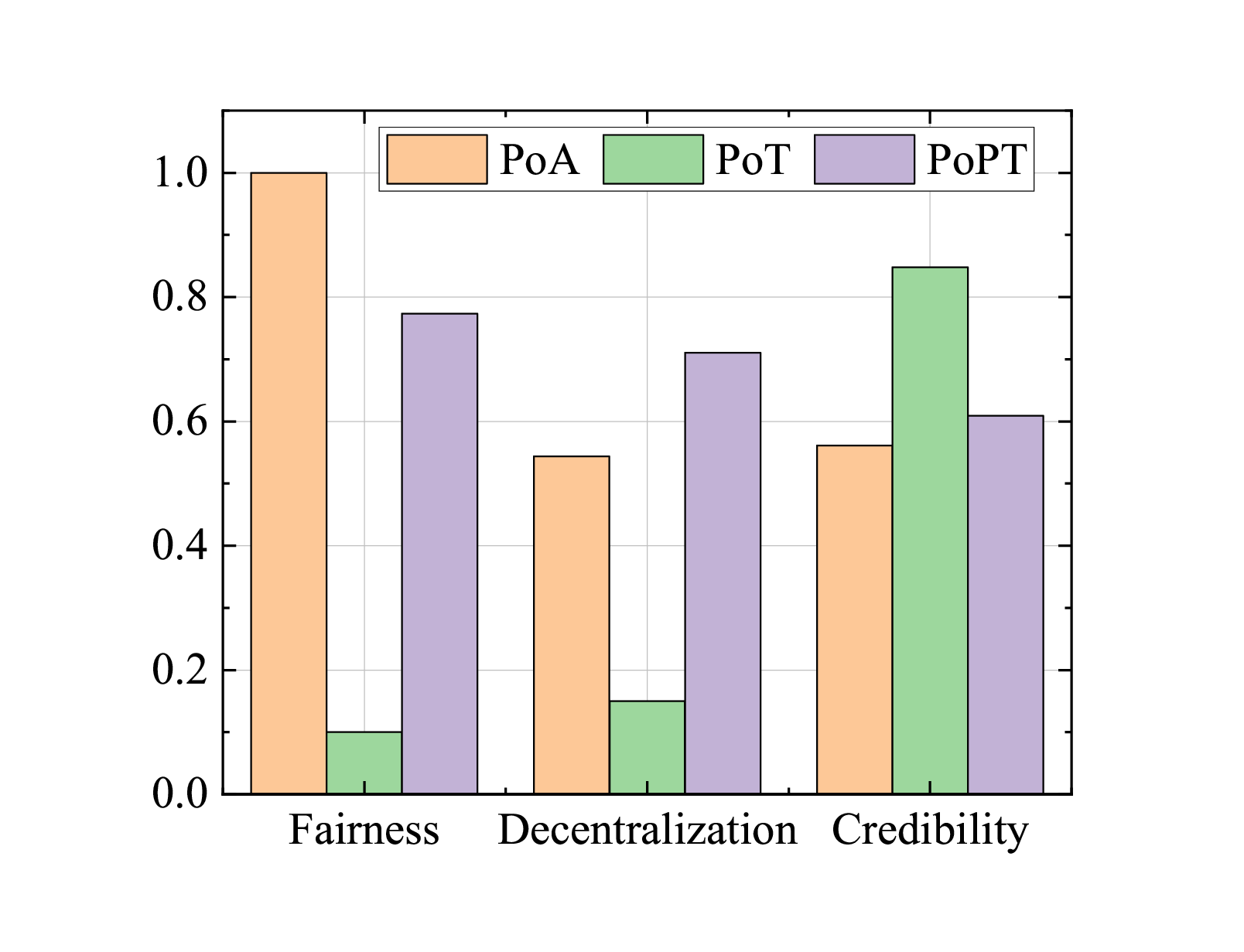}%
\label{fig_6_case2}}
\caption{\small{The comparison of PoPT with other consensus mechanisms. (a)The comparison of time consumption between PoW and PoPT. (b)The comparison of fairness, decentralization, and credibility among PoA, PoT, and PoPT.}}
\end{figure*}
We take the energy trade in a smart grid system as an example.
We specify the options mentioned in subsection \ref{Preliminaries} as choosing electricity trading users and define users as prosumers of electricity who can dispatch energy in a smart grid.
Assume that the price of a transaction is related to the bids of both sellers and buyers.
To ensure the utility of the sellers, the buyers' bids cannot be lower than the national standard electricity price which is set as $0.5$ yuan/kW·h.
$100$ ordinary nodes apply to be block-recorder, i.e., $N_{a}=100$.
The parameters of prospect theory are $\alpha=\beta=0.88$, $\lambda=2.25$, and $\phi=0.74$.
\subsection{Simulation results}
Firstly, we illustrate the effect of seller's personality (referring to seller's Reference Bid (RB) and $\lambda$) on buyers' PV at different bids in Fig. \ref{fig_2_case1}, where RB means the seller's expected price and $\lambda$ indicates sellers sensitivity to loss.
It is shown that the higher the seller's RB or the $\lambda$ is, the lower the buyer's PV is.
While the RB and $\lambda$ are fixed, the higher the buyer's bid is, the higher the PV is.
To further study the influence of personality on PV, we then illustrate the comparison of PV with three types of buyers in Fig. \ref{fig_2_case2}.
We fix one seller, i.e., RB$=0.8$ yuan/kW·h, $\lambda=2.25$.
Let the bids of buyers as $[0.5, 0.7]$, $[0.7, 1.0]$, and $[1.0, 1.2]$, which are denoted as type A, B, and C, respectively.
The results show that during $50$ iterations, the PV of C is always the highest among the three types of buyers, followed by the PV of buyers B, and the PV of buyers A is the lowest.
That is, buyers with higher bids get higher PV in the long run.
This shows that the PoPT consensus mechanism can ensure security to a certain extent.
Because malicious buyers profit from malicious behavior, they offer low bid in their transactions, and low bid leads to low PV in the PoPT consensus mechanism.

Then, we describe the variation of the node's willingness and the optimal reward with the expected utility of the ordinary node in the consensus process in Fig. \ref{fig_3}.
The willingness of a node to be an applicant decreases with increasing the expected utility of an ordinary node, while the optimal reward increase with increasing the expected utility of an ordinary node proportionally.
The higher the expected utility of the ordinary node, the ordinary node is more involuntary to be an applicant node.
This phenomenon leads to the need for higher rewards to attract ordinary nodes.

Moreover, we present $10$ applicant nodes' probability of becoming the block-recorder in Fig. \ref{fig_4} and compare the probabilities with their PV shares.
As defined in fairness, the closer the probability and share, the fairer the PoPT consensus mechanism is.
We observe from Fig. \ref{fig_4} that the difference between the share and the probability is no more than $0.1$ except for applicant node $7$.
As mentioned earlier, the minimum number of applicant nodes that make the sum of probabilities greater than $50\%$ measures the decentralization.
To ensure the decentralization of the PoPT consensus mechanism, most applicant nodes' probability of being block-recorder is slightly less than their PV share.
Since the probability that all applicant nodes become the block-recorder sums up to $1$, luckily for applicant node $7$, its probability of becoming the block-recorder is greater than its PV share.

Afterwards, because the weight of fairness, decentralization, and credibility sums up to $1$, we illustrate the weight of fairness and decentralization to the comprehensive performance of the PoPT consensus mechanism in Eq. \eqref{O} when $\mathcal{F}=0.9$, $\mathcal{D}=0.5$, and $\mathcal{C}=0.1$.
It is shown in Fig. \ref{fig_5_case1} that the weight of fairness has greater effects on comprehensive performance than the weight of decentralization does, which results from fairness being far greater than decentralization and credibility.
We also illustrate the comprehensive performance of PoPT under the environment of the smart grid trading system in Fig. \ref{fig_5_case2}, where fairness, decentralization, and credibility have the same weight.
It shows that the fairness, decentralization, and credibility of PoPT are around $0.75$, $0.70$, and $0.6$, respectively, which means the GWO algorithm is suitable for PoPT to find the optimal probability.
In addition, the comprehensive performance is also considerable.

Finally, we compare PoPT with other consensus mechanisms.
On one hand, we compare the time consumption of PoPT with that of PoW in Fig. \ref{fig_6_case1}.
The hash puzzle makes the average block interval time of PoW up to $600$ seconds, while PoPT only spends about $12$ seconds to determine the block-recorder.
On the other hand, the fairness, decentralization, and credibility of PoPT, PoA \cite{poa}, and PoT \cite{pot} are evaluated in Fig. \ref{fig_6_case2}.
In PoA, nodes have the same chance to be block-recorder regardless of attributes, and PoT only allows nodes with high reputations to be block-recorder.
While the PoPT sets a reasonable probability, which qualifies specific nodes to be block-recorder.
That is, PoA achieves high fairness and decentralization but low credibility, while PoT achieves high credibility but low fairness and decentralization.
Conversely, PoPT achieves good decentralization, higher fairness than PoT, and higher reliability than PoA by probability determination.

\section{Conclusions}\label{conclusion}
In this paper, a game-based consensus mechanism named PoPT concerning the prospect theory is proposed.
PoPT analyzes the consensus process from a game-theoretic perspective and takes decision-making in risk into account.
Moreover, PoPT describes players' psychological activities when facing risky decisions and integrates them into the consensus processes of the blockchain.
The principles and workflows of PoPT are described in detail, and the effectiveness of PoPT is illustrated by a smart grid system.

Different from previous studies, the PoPT consensus mechanism implements game theory to build a behavioral framework in a decentralized environment, which has excellent performance on both fundamental and frontier evaluation metrics.
Regarding design philosophy, the PoPT consensus mechanism integrates prospect theory into the user's interaction, reflecting the user's decision-making.
Regarding performance, the PoPT consensus mechanism does not need to solve hashing problems as PoW does, which saves computing power costs.
Moreover, the PoPT consensus mechanism optimizes the probability of the applicant node being block-recorder and the block reward, which guarantees fairness, decentralization, reliability, and attractiveness.
\bibliography{refabrv}
\bibliographystyle{IEEEtran}

\vfill

\end{document}